\DeclareMathAlphabet      {\mathbfit}{OML}{cmm}{b}{it} 
\newtheorem{theorem}{Theorem}
\newtheorem{lemma}{Lemma}
\newtheorem{EXAMPLE}{Example}
\newenvironment{example}{\begin{EXAMPLE}\rm}{\rm\end{EXAMPLE}} 
\begin{document}

\newcommand{\sff} {\mathsf{f}}
\newcommand{\code}{{\mathcal{C}}}

\sloppy

\title{Minimum Distance Distribution of Irregular Generalized LDPC Code Ensembles}

%
\author{
  \IEEEauthorblockN{Ian P.~Mulholland and Mark F.~Flanagan}
  \IEEEauthorblockA{School of Electrical, Electronic\\ \& Communications Engineering\\
       University College Dublin\\
       Dublin, Ireland\\
    Email: \{mulholland, mark.flanagan\}@ieee.org} 
  \and
  \IEEEauthorblockN{Enrico Paolini}
  \IEEEauthorblockA{Department of Electrical, Electronic,\\ 
  and Information Engineering ``G. Marconi''\\
       University of Bologna\\
       Cesena (FC), Italy\\
    Email: e.paolini@unibo.it}
}
 
\maketitle

\begin{abstract}
In this paper, the minimum distance distribution of irregular generalized LDPC (GLDPC) code ensembles is investigated. Two classes of GLDPC code ensembles are analyzed; in one case, the Tanner graph is regular from the variable node perspective, and in the other case the Tanner graph is completely unstructured and irregular. In particular, for the former ensemble class we determine exactly which ensembles have minimum distance growing linearly with the block length with probability approaching unity with increasing block length. This work extends previous results concerning LDPC and regular GLDPC codes to the case where a hybrid mixture of check node types is used. 
\end{abstract}

\section{Introduction}
Recently, the design and analysis of coding schemes representing generalizations of Gallager's low-density parity-check (LDPC) codes \cite{Gallager1963} has gained increasing attention. This interest is motivated above all by the potential capability of this class of codes to offer a better compromise between waterfall and error floor performance than is currently offered by state-of-the-art LDPC codes.

In the Tanner graph of an LDPC code, any degree-$s$ check node (CN) may be interpreted as a length-$s$ single parity-check (SPC) code, i.e., as an $(s,s-1)$ linear block code. The first proposal of a class of linear block codes generalizing LDPC codes may be found in \cite{Tanner1981}, where it was suggested to replace each CN of a regular LDPC code with a generic linear block code, to enhance the overall minimum distance. The corresponding coding scheme is known as a regular generalized LDPC (GLDPC) code, or Tanner code, and a CN that is not a SPC code as a generalized CN. More recently, irregular GLDPC codes were considered (see for instance \cite{Liva2008}). For such codes, the variable nodes (VNs) may exhibit different degrees and the CN set is composed of a mixture of different linear block codes.

In this paper, we present results on the minimum distance distribution of two classes of GLDPC code ensembles. It is shown that for the considered VN-regular ensembles, the ensembles for which the minimum distance grows linearly with the block length with probability approaching unity (with increasing block length) are precisely those which have \emph{good growth rate behavior} as defined in \cite{Flanagan2012}. For the unstructured irregular GLDPC ensembles, we provide an upper bound on the probability of the minimum distance lying below a certain fraction of the code's block length.

\section{Preliminaries and Notation}
In this work, we will consider two GLDPC code ensembles. These ensembles share definitions from the CN perspective, so we begin by giving these definitions. 

We define a GLDPC code ensemble as follows. There are $n_c$ different CN types $t\in I_c = \{1,2,\dots,n_c\}$. For each CN type $t\in I_c$ we associate a local code denoted by $\code_t$, and we denote by $k_t, s_t$ and $r_t$, the dimension, length and minimum distance of $\code_t$, respectively. For $t \in I_c$, $\rho_t$ denotes the fraction of edges connected to CNs of type $t$. The polynomial $\rho(x)$ is defined by $\rho(x) \triangleq \sum_{t\in I_c}\rho_t x^{s_t-1}$. 

If $E$ denotes the number of edges in the Tanner graph, the number of CNs of type $t\in I_c$ is then given by $E \rho_t / s_t$. Denoting as usual $\int_0^1 \rho(x) \, {\rm d} x$ by $\int \rho$, it is easily deduced that the number of CNs is given by $m = E \int \rho$. Therefore, the fraction of CNs of type $t \in I_c$ is given by
\begin{equation}
\label{eq:gamma_t_definition}
\gamma_t = \frac{\rho_t}{s_t \int \rho}\; .
\end{equation}
The parity-check matrix for CN type $t \in I_c$ is denoted by $\bar{\mathbf{H}}_t$. The weight enumerating function (WEF) for CN type $t \in I_c$ is given by
\begin{align*}
A^{(t)}(z) &= \sum_{u=0}^{s_t} A_u^{(t)} z^u = 1 + \sum_{u=r_t}^{s_t} A_u^{(t)} z^u \; .
\end{align*}
Here $A_u^{(t)} \ge 0$ denotes the number of weight-$u$ codewords for CNs of type $t$. We assume that the local codes associated with all CNs have minimum distance of at least $2$ (i.e., $r_t\geq 2$ for $t \in I_c$).

For ensembles which have a positive fraction of CNs with minimum distance $2$, the parameter $C$ is defined by
\begin{equation}
\label{eq:C_defn}
C = 2\sum_{t: r_t=2}\frac{\rho_t A_2^{(t)}}{s_t}.
\end{equation}

The number of VNs, which is also equal to the overall block length of the ensemble, is denoted by $N$. The two ensembles differ from the perspectives of VN distribution and Tanner graph interconnectivity. We next provide the further definitions for these two ensembles separately. 

\subsection{Ensemble 1}
\label{Ensemble1def}
Ensemble 1 is an extension of the definition given in \cite{Boutros1999} and in \cite{Lentmaier1997,Lentmaier1999} for regular GLDPC codes to the hybrid CN case (a related class of codes was also considered in \cite{Barg2008}). The overall parity-check matrix of the code is a formed by vertically concatenating $q \ge 2$ block rows $\mathbf{H}_{\ell}$, $\ell = 1,2,\ldots,q$. The first block row $\mathbf{H}_1$ is a block-diagonal matrix, whose diagonal elements consist of $\gamma_t m / q$ matrices $\bar{\mathbf{H}}_t$ for each $t \in I_c$. These are the parity-check matrices of the constituent codes associated with the $n_c$ CN types. Each of these parity-check matrices is repeated $\gamma_t m / q$ times along the diagonal. The resulting matrix $\mathbf{H_1}$, which forms the first of the $q$ block rows of the parity-check matrix for the GLDPC code, is given by 
\begin{equation*}
\mathbf{H}_1 =
\begin{pmatrix}
  \bar{\mathbf{H}}_1 & \mathbf{0}        &  \cdots             &                     & \cdots                 & \mathbf{0}             & \mathbf{0} \\
  \mathbf{0}         & \ddots            &                     &                     & \cdots                 & \mathbf{0}             & \mathbf{0} \\
  \vdots             &                   & \bar{\mathbf{H}}_1  &                     &                        & \vdots                 & \vdots     \\
                     &                   &                     & \ddots              &                        &                        &            \\            
  \vdots             & \vdots            &                     &                     & \bar{\mathbf{H}}_{n_c} &                        & \vdots     \\            
  \mathbf{0}         & \mathbf{0}        & \cdots              &                     &                        & \ddots                 & \mathbf{0}  \\            
  \mathbf{0}         & \mathbf{0}        & \cdots              &                     & \cdots                 & \mathbf{0}             & \bar{\mathbf{H}}_{n_c}      
\end{pmatrix}
\end{equation*}
The other $q-1$ block rows $\mathbf{H}_2,\dots,\mathbf{H}_q$ are formed by performing random column permutations $\Pi_2, \Pi_2, \ldots , \Pi_q$ on $\mathbf{H}_1$. Stacking the block-rows on top of one another results in $\mathbf{H}$, the parity-check matrix of the GLDPC code.

The ensemble is defined according to a uniform probability distribution on all permutations $\Pi_{\ell}$, for every $\ell = 2,3,\ldots,q$ (together with independence of these permutations). Note that the Tanner graph for this ensemble is \emph{VN-regular}, i.e., all VNs have the same degree $q$. The design rate $R$ for the irregular GLDPC codes in Ensemble 1 is given by
\begin{equation}
\label{GLDPC_code_rate_eqn}
R = 1-q\left(1 - \sum_{t\in I_c} \frac{\rho_t k_t}{s_t} \right).
\end{equation}

\subsection{Ensemble 2}
\label{Ensemble2def}
This second ensemble we consider is a generalization of the unstructured irregular LDPC ensemble analyzed in \cite{Richardson2001}, and is also a special case of the unstructured irregular doubly-generalized LDPC ensemble analyzed in \cite{Flanagan2011,Flanagan2012}. Here $\lambda_d$ denotes the fraction of edges connected to VNs of degree $d$, where $d\in\{2,3,\dots,d_v\}$. The polynomial $\lambda(x)$ is defined by $\lambda(x) \triangleq \sum_{d=2}^{d_v}\lambda_d x^{d-1}$. We denote as usual $\int^1_0\lambda(x){\rm d}x$ by $\int\lambda$. The \emph{node-perspective} VN degree distribution is defined as
\begin{equation}
\tilde{\lambda}_d=\frac{\lambda_d}{d \int\lambda}.
\label{eq:node_perspec_DD}
\end{equation}
Here $\tilde{\lambda}_d$ is the fraction of VNs having degree $d$.

The ensemble is defined according to a uniform probability distribution on all permutations connecting the $E$ edges of the Tanner graph. Note that whereas Ensemble 1 is VN-regular, Ensemble 2 is \emph{VN-irregular} (i.e., in general, the VNs do not all have the same degree). 

\section{Minimum Distance Results for Ensemble 1}
In \cite{Lentmaier1997,Lentmaier1999}, a lower bound on the minimum distance of a regular GLDPC code was found (generalizing the corresponding result for LDPC codes in \cite{Gallager1963}). Following a similar approach yields the following theorem in the case of Ensemble 1. 

\medskip
\begin{theorem}\label{mindistthm1}
Let $d_{\min}$ be the minimum distance of a GLDPC code picked randomly with uniform probability from Ensemble~1 described above. Then
\begin{equation}
\mathrm{Pr}(d_\mathrm{min}\leq \alpha^* N) \rightarrow 0 \quad \mathrm{as} \quad N \rightarrow \infty 
\end{equation}
where $\alpha^*$ is the smallest solution in the interval $(0,1)$ to the equation $G(\alpha) = 0$,

\begin{align}\label{eq:G(alpha)_irregular_CN_set}
G(\alpha) & = (1-q) h(\alpha) -q\, \alpha \log \sff^{-1}(\alpha) \notag \\ 
\, & \phantom{------}+ q \left(\smallint\!\rho\right) \sum_{t\in I_c} \gamma_t \log A^{(t)}(\sff^{-1}(\alpha)) \; ,
\end{align}
and the invertible function $\sff$ is given by
\begin{align}\label{eq:f}
\sff(z) = \left( \smallint\! \rho \right) \sum_{t \in I_c} \gamma_t \frac{ z \, \frac{\mathrm{d}
A^{(t)}(z)}{\mathrm{d} z}}{A^{(t)}(z)} \, .
\end{align}
Here $h(\alpha) = -\alpha\log(\alpha)-(1-\alpha)\log(1-\alpha)$ denotes the binary entropy function in nats. Furthermore, for $q>2$, such an $\alpha^*$ always exists, while for $q=2$ such an $\alpha^*$ exists if and only if $C<1$, where $C$ is given by \eqref{eq:C_defn}.
\end{theorem}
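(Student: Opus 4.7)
The plan is to bound $\Pr(d_\mathrm{min}\le \alpha N)$ via the first-moment method, writing
\begin{equation*}
\Pr(d_\mathrm{min}\le \alpha N) \;\le\; \sum_{\ell=1}^{\lfloor \alpha N \rfloor} \mathbb{E}[A_\ell],
\end{equation*}
where $A_\ell$ counts weight-$\ell$ codewords of a code drawn uniformly from Ensemble~1. Showing that this sum vanishes as $N\to\infty$ for every $\alpha<\alpha^*$ then delivers the theorem.

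The first concrete step is to obtain a closed form for $\mathbb{E}[A_\ell]$. Because $\mathbf{H}_1$ is block-diagonal with $\gamma_t m/q$ copies of each $\bar{\mathbf{H}}_t$, the number of weight-$\ell$ vectors in $\ker\mathbf{H}_1$ equals $[z^\ell]\prod_{t\in I_c}[A^{(t)}(z)]^{\gamma_t m/q}$. A fixed weight-$\ell$ vector $\mathbf{c}$ is a codeword iff $\Pi_i\mathbf{c}\in\ker\mathbf{H}_1$ for every $i$; by independence of $\Pi_2,\dots,\Pi_q$ and uniformity of each $\Pi_i\mathbf{c}$ on weight-$\ell$ vectors I obtain
\begin{equation*}
\mathbb{E}[A_\ell] \;=\; \binom{N}{\ell}^{1-q}\left([z^\ell]\prod_{t\in I_c}[A^{(t)}(z)]^{\gamma_t m/q}\right)^{\!q}.
\end{equation*}
Specializing to $\ell=\alpha N$, a Hayman-type saddle-point estimate on the coefficient extraction is the next move: the saddle equation, after dividing by $N$ and using $m/N=q\smallint\!\rho$, reduces exactly to $\sff(z^*)=\alpha$, so $z^*=\sff^{-1}(\alpha)$. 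Substituting and combining with $\binom{N}{\alpha N}\doteq e^{N h(\alpha)}$ yields $N^{-1}\log \mathbb{E}[A_{\alpha N}]\to G(\alpha)$, exactly as in \eqref{eq:G(alpha)_irregular_CN_set}. Because $\alpha^*$ is the \emph{smallest} positive root of $G$, a Laplace-type argument then shows that the contribution to the Markov sum from any linear range $\ell\in[\varepsilon N,\alpha N]$ with $\alpha<\alpha^*$ decays exponentially.

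The main technical obstacle is the sublinear range $\ell=o(N)$, where the saddle-point approximation is no longer uniform and $\mathbb{E}[A_\ell]$ must be estimated directly from the closed form. Expanding $\prod_t(1+A^{(t)}_{r_t}z^{r_t}+\cdots)^{\gamma_t m/q}$, the small-$\ell$ contribution is dominated by CNs with $r_t=2$, and using $\binom{N}{\ell}\sim N^\ell/\ell!$ together with $m/q=N\smallint\!\rho$ produces a schematic bound of the form $\mathbb{E}[A_\ell]=O\bigl(N^{-(q-2)\ell/2}\bigr)$ times a combinatorial factor in $C$. For $q>2$ this is polynomially small in $N$ for every fixed $\ell$ and the sublinear sum vanishes unconditionally; for $q=2$ the $N$-dependence cancels, and a summable bound demands precisely $C<1$, matching the threshold appearing in the theorem.

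The final claim about existence of $\alpha^*$ reduces to an analysis of $G$ near its endpoints. Since $G(0)=0$ and the entropy term $(1-q)h(\alpha)$ contributes $-(q-1)\alpha\log(1/\alpha)$ as $\alpha\to 0^+$, this term dominates for $q>2$ and forces $G<0$ in a right neighborhood of zero; combined with the fact that $G$ must become positive somewhere in $(0,1)$ (by comparison with the total number of words in the ambient space), the intermediate-value theorem produces a root, which is then the smallest such. For $q=2$ the entropy contribution vanishes at leading order and the sign of $G$ near zero is controlled by $\log C$, so such a root exists in $(0,1)$ iff $C<1$. This mirrors precisely the dichotomy encountered in the sublinear-weight step, with both tracing back to the same interface between saddle-point and combinatorial regimes at small weights.
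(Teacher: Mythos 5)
Your core argument is the same first-moment computation as the paper's: both express the expected number of weight-$d$ codewords as $\binom{N}{d}^{1-q}$ times the $q$-th power of the number of weight-$d$ vectors in the kernel of the block-diagonal layer $\mathbf{H}_1$, and both identify the exponent $G(\alpha)$ through the stationarity condition $\sff(z)=\alpha$, finishing with Markov/union bound over $d\le \alpha^* N$. The differences are twofold. First, where you extract the coefficient by a Hayman saddle-point estimate (asymptotically exact but requiring uniformity in $d$), the paper uses the elementary tilting (Chernoff) bound $[z^d]\,\phi(z)\le z^{-d}\phi(z)$, valid for every $z>0$ and every $N,d$; this covers the sublinear-weight range with no separate argument, whereas your plan needs a dedicated treatment of $\ell=o(N)$, and your per-fixed-$\ell$ estimate $\mathbb{E}[A_\ell]=O\bigl(N^{-(q-2)\ell/2}\bigr)$ does not by itself control the sum over all $\ell\le\varepsilon N$ --- you would need a bound uniform in $\ell$, which is exactly what the tilting bound supplies. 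Second, for the existence of $\alpha^*$ and the $q>2$ versus ($q=2$, $C<1$) dichotomy, the paper identifies $G(\alpha)$ with the growth rate of the ensemble in \cite{Flanagan2012} and cites the analysis of \cite{Flanagan2011}, whereas you attempt a direct small-$\alpha$ analysis; your conclusions are right, but the stated reason --- that the entropy term $(1-q)h(\alpha)$ dominates for $q>2$ --- is not accurate: when minimum-distance-$2$ CNs are present, $\sff^{-1}(\alpha)\sim c\sqrt{\alpha}$, so the term $-q\,\alpha\log\sff^{-1}(\alpha)$ contributes $+\tfrac{q}{2}\,\alpha\log(1/\alpha)$ at the same order, and it is the combined coefficient $-\tfrac{q-2}{2}$ that forces $G<0$ near zero, with the $q=2$ case decided only at the next order, where indeed $G(\alpha)\approx\alpha\log C$. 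With those two repairs (use the non-asymptotic tilting bound, or prove uniformity of the saddle-point and sublinear estimates; state the small-$\alpha$ expansion correctly, and justify that $G$ becomes positive via the design rate $R>0$), your route yields a self-contained proof of the dichotomy that the paper outsources to prior work, which is a genuine, if modest, gain in completeness over the published argument.
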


\begin{proof}
\label{mindistprf}
Let $P_1(d)$ denote the probability that a length-$N$ vector $\mathbf{c}$ which satisfies $\mathbf{c}\mathbf{H_1}^T = \mathbf{0}$ (i.e., which satisfies the parity checks in the first block row $\mathbf{H}_1$) has Hamming weight $d$. The generating function for this sequence is 	
\begin{equation*}
\label{MGFOnes}
\phi^{(1)}(z) = \sum_{d=0}^{N} P_{1}(d) z^{d} = \prod_{t\in I_c}[\varphi^{(t)}(z)]^{\gamma_t m / q}
\end{equation*}
where $\varphi^{(t)}(z) = A^{(t)}(z) / 2^{k_t}$ is the moment generating function of the Hamming weight of a codeword in $\code_t$. 

Since $P_1(d)\geq 0 \ \forall d$, we can write (for any $z>0$)
\begin{equation}
\label{Pidbound}
P_1(d) \leq \exp\left(\frac{m}{q}\sum_{t \in I_c}\gamma_t\log[\varphi^{(t)}(z)] - d \log z\right).
\end{equation}

Next, let $F_1(d)$ denote the probability that a length-$N$ vector of Hamming weight $d$ satisfies the parity checks in the first block row $\mathbf{H}_1$ of $\mathbf{H}$. An upper bound on this probability is readily deduced from \eqref{Pidbound} as
\begin{align*}
\label{Fidbound}
F_1(d)\!& \leq \frac{2^{m\sum_{t}k_t\gamma_t / q}}{\binom{N}{d}}\exp\left(\frac{m}{q}\sum_{t \in I_c}\gamma_t\log[\varphi^{(t)}(z)] - d \log z\right) \\
&\!= \binom{N}{d}^{-1} \exp\left(\frac{m}{q}\sum_{t \in I_c}\gamma_t\log[A^{(t)}(z)] - d \log z\right) \; .
\end{align*}
Any vector which satisfies all $q$ of the block-rows of $\mathbf{H}$ is a valid codeword for the GLDPC code, and satisfaction of the different block rows are independent events. Thus, an upper bound on $F(d)$, the probability that a length-$N$ vector of weight $d$ is a codeword of the GLDPC code, is given by
\begin{align*} 
F(d)&\!= [F_1(d)]^q\\
&\!\leq \binom{N}{d}^{-q} \exp\left(m\sum_{t \in I_c}\gamma_t\log[A^{(t)}(z)]-qd \log z\right) \; .
\end{align*}
	
The expected number of codewords of weight $d$, for a code chosen uniformly at random from Ensemble 1, is thus
\begin{align*}
M(d)&\!= \!\binom{N}{d}F(d)\\
&\!\leq \!\binom{N}{d}^{-(q-1)} \! \exp\left(m\sum_{t \in I_c}\gamma_t\log[A^{(t)}(z)]-qd \log z\right) .
\end{align*}
For any value of $d$, the tightest bound is obtained when $z$ is chosen to minimize the exponent; this leads to $\sff(z) = \alpha$, where $\sff$ is given by \eqref{eq:f}, and where we let $\alpha = d/N$.

We use this result to bound the probability of the event $d_{\mathrm{min}}\leq d_0$. Using Markov's inequality,
\begin{align*}
\mathrm{Pr}(d_\mathrm{min}\leq d_0)&\leq\sum_{d=1}^{d_0}M(d)\\
&\leq d_0\max_{1\leq d\leq d_0}\left\{ \binom{N}{d}^{-(q-1)} \right.\\
&\quad\left. \! \! \! \times \exp\left(m\sum_{t=1}^{n_c}\gamma_t\log[A^{(t)}(z)]-qd \log z\right)\right\}.
\end{align*}

Using the relation \cite{Gallager1968}
\begin{equation*}
\label{Gal68formula}
\binom{N}{d}\geq \sqrt{\frac{N}{8 d(N-d)}}\exp\left(N h\left(\frac{d}{N}\right)\right)
\end{equation*}
and again using the substitution $\alpha = d/N$, leads to 
\begin{equation}
\label{Pdmin_leq_d0}
\mathrm{Pr}(d_\mathrm{min}\leq d_0)\leq \max_{1\leq d \leq d_0}\exp[NG(\alpha)+o(N)]
\end{equation}
where $G(\alpha)$ is given by \eqref{eq:G(alpha)_irregular_CN_set}, and we have used the fact that the total number of edges in the Tanner graph is $E = m / \int \rho = Nq$. 
Here
\begin{equation*}
\label{oN}
o(N)=\log\left[d_0\left(8N\alpha(1 - \alpha)\right)^{\frac{q-1}{2}}\right] \; .
\end{equation*}
Note that $o(N)/N\rightarrow 0$ as $N\rightarrow\infty$.


To prove that $\mathrm{Pr}(d_\mathrm{min}\leq \alpha_0 N) \rightarrow 0$ as $N \rightarrow \infty$ for a given $\alpha_0$, it suffices to show that $G(\alpha)<0$ for $0<\alpha<\alpha_0$; in this case, the bound in \eqref{Pdmin_leq_d0} guarantees a decrease of $\mathrm{Pr}(d_\mathrm{min}\leq \alpha_0 N)$ to zero with increasing $N$. Next, note that $G(\alpha)$ is identical to the growth rate of the weight distribution as defined at the beginning of \cite[Section III]{Flanagan2012} in the context of the GLDPC ensemble considered therein (to see this connection explicitly, see in particular Definition 4.1 and Theorem 4.2 in \cite{Flanagan2012}). Also, the smallest positive value $\alpha^*$ which solves $G(\alpha) = 0$ exactly matches the value of the \emph{critical exponent codeword weight ratio} of this ensemble \cite[Section III]{Flanagan2012}. From the analysis of the function $G(\alpha)$ previously conducted in \cite{Flanagan2011}, we know that $G(\alpha)<0$ for $0<\alpha<\alpha^*$ if and only if either $q>2$ or $q=2$ and $C<1$.
\end{proof}

\medskip
It is interesting to note that, while the ensemble considered in this latter result is also VN-regular, and the ensemble definitions match from the CN side, they are in fact slightly different ensembles. However, the above result shows that their respective weight distribution behaviors are tightly connected.

\medskip
\begin{example}
\label{mindistexample}
\begin{figure} 
\includegraphics[width=\linewidth]{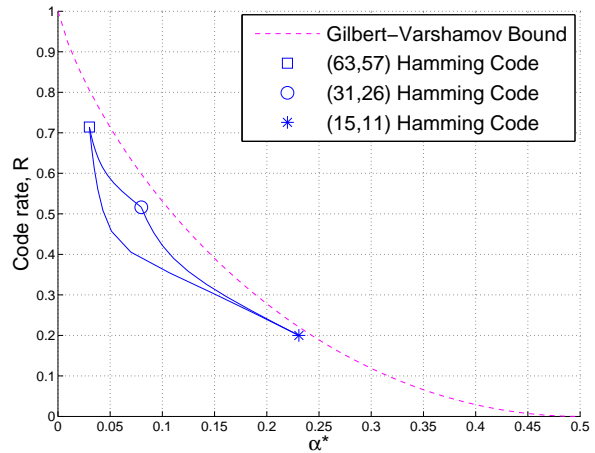}
\caption{Ratios of minimum distance to block length for irregular GLDPC codes, plotted against the design rate $R$ of the ensemble.}
\label{mindistfig}
\end{figure}
In this example, the ensemble relative minimum distance $\alpha^*$ for some irregular GLDPC code ensembles of ensemble type 1 are evaluated using Theorem \ref{mindistthm1}, and plotted against the design rate of the ensemble (as given by \eqref{GLDPC_code_rate_eqn}) -- these results are plotted in Figure~\ref{mindistfig}. Here we use Hamming $(63,57)$, $(31,26)$ and $(15,11)$ codes as the local codes at the CNs. Note that for Hamming codes of length $s_t$, we have
\begin{align*}
A^{(t)}(z) &= \frac{1}{(s_t+1)}\Big((1+z)^{s_t}\\
&\qquad \left.+s_t(1+z)^{(s_t-1)/2}(1-z)^{(s_t+1)/2}\right).
\end{align*}
In each case, two of the three code types are used ($I_c = \{1,2\}$), with $\gamma_1$ being varied between 0 and 1 - this results in the blue curves joining pairs of points corresponding to \emph{regular} GLDPC code ensembles. Note that while mixing CN types tends to bring us further from the Gilbert-Varshamov bound, the threshold of the ensemble will be optimized for some mixture of CN types.
\end{example}

\section{Minimum distance results for Ensemble 2}
In this section, we analyze the minimum distance distribution of Ensemble 2. We will assume throughout this section that the GLDPC ensemble under consideration has at least one CN type with minimum distance equal to $2$. In this context, we will derive an upper bound on $\mathrm{Pr}(d_{\mathrm{min}}\leq\alpha^*N)$ -- note by contrast that for Ensemble 1, Theorem \ref{mindistthm1} represents a related but stronger result, as it defines precisely the conditions under which this value tends to zero as $N \rightarrow \infty$. The results of this section rely on the following basic Lemma.

\medskip
\begin{lemma}
\label{Coef}
For any positive integer $j$,
\begin{equation}
\lim_{N\rightarrow\infty}\mathrm{Coef}\left[\prod_{t\in I_c}\left[A^{(t)}(x)\right]^{\gamma_t m},x^{2j}\right]=\frac{(EC/2)^j}{j!}
\end{equation}
where $\mathrm{Coef}\left[f(x),x^i\right]$ denotes the coefficient of $x^i$ in the Taylor expansion of $f(x)$ (as in \cite{Orlitsky2005}), and the parameter $C$ is given by \eqref{eq:C_defn}.
\end{lemma}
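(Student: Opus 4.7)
The approach is to unfold the coefficient combinatorially and identify the dominant contribution. Expanding the product, $\mathrm{Coef}[\prod_{t} [A^{(t)}(x)]^{\gamma_t m}, x^{2j}]$ equals the weighted count of assignments $(u_1,\ldots,u_m)$ of a local codeword weight $u_i \in \{0\}\cup\{r_{t_i},\ldots,s_{t_i}\}$ to each of the $m$ CNs with $\sum_i u_i=2j$, each such assignment carrying weight $\prod_i A_{u_i}^{(t_i)}$. Since every nonzero $u_i$ is at least $2$, the number $k$ of CNs with $u_i\neq 0$ satisfies $k\leq j$.

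I would then stratify the sum by $k$. Because the $A_u^{(t)}$ are bounded constants (they depend only on the fixed, finite collection of CN types), the contribution of any stratum is bounded by a constant times $\binom{m}{k}=O(m^k)=O(N^k)$, and so every stratum with $k<j$ is $o(N^j)$. It remains to show that the $k=j$ stratum yields exactly $(EC/2)^j/j!$ to leading order. In this stratum every nonzero $u_i$ must equal $2$ and each of the $j$ selected CNs must belong to a type with $r_t=2$, reducing the count to
\[
\sum_{\substack{(k_t)_{t:\,r_t=2} \\ \sum_t k_t = j}}\ \prod_{t:\,r_t=2}\binom{\gamma_t m}{k_t}\bigl(A_2^{(t)}\bigr)^{k_t}.
\]
Using $\binom{\gamma_t m}{k_t}\sim(\gamma_t m)^{k_t}/k_t!$ and the multinomial theorem, this is asymptotic to $\tfrac{m^j}{j!}\bigl(\sum_{t:\,r_t=2}\gamma_t A_2^{(t)}\bigr)^{j}$. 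Substituting $\gamma_t=\rho_t/(s_t\!\int\!\rho)$ from \eqref{eq:gamma_t_definition} together with $m=E\!\int\!\rho$ collapses the inner sum to $C/2$ via the definition \eqref{eq:C_defn}, producing $(EC/2)^j/j!$ as required.

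The only genuine obstacle is bookkeeping the error terms: one must uniformly bound, as $N\to\infty$, the contributions of configurations with $k<j$ or with some $u_i\geq 3$, and must also control the subleading terms of $\binom{\gamma_t m}{k_t}$ in the $k=j$ stratum. Both are tame because, for each fixed $j$, only finitely many weight patterns are relevant and each $A_u^{(t)}$ with $u\leq 2j$ is a fixed constant; the resulting counts are polynomials in $m$ of degree at most $j$, with the coefficient of the leading $m^j$ term computed above. Once the subleading terms are absorbed into $o(N^j)$, the ratio of the coefficient to $(EC/2)^j/j!$ tends to $1$, which is the stated limit.
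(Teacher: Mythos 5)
Your proof is correct and follows essentially the same route as the paper's: identify the coefficient's dominant $\Theta(m^j)$ contribution as coming exclusively from configurations where $j$ check nodes each take a weight-$2$ local codeword (possible only for types with $r_t=2$), use $\binom{\gamma_t m}{k_t}\sim(\gamma_t m)^{k_t}/k_t!$ and the multinomial theorem to get $\frac{m^j}{j!}\bigl(\sum_{t:r_t=2}\gamma_t A_2^{(t)}\bigr)^j$, and then substitute $\gamma_t=\rho_t/(s_t\int\rho)$ and $m=E\int\rho$ to obtain $(EC/2)^j/j!$. The only difference is presentational: the paper carries out this argument on a representative two-type example to avoid notational clutter, whereas you write the stratification by the number of active check nodes in full generality, which if anything is slightly more complete.
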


\medskip
\begin{proof}
The proof is notationally cumbersome, so we present it for the example where $I_c = 2$, with $A^{(1)}(x) = 1+A_2x^2+A_3x^3$ and $A^{(2)}(x) = 1+B_2x^2+B_4x^4$. Then
\begin{equation*}
P(x)=(1+A_2x^2+A_3x^3)^{\gamma_1m}(1+B_2x^2+B_4x^4)^{\gamma_2m}.
\end{equation*}
The multinomial theorem then gives
\begin{align*}
P(x)=\sum_{\substack{i_2+i_3\leq\gamma_1m\\j_2+j_4\leq\gamma_2m}} & \binom{\gamma_1m}{i_2 \; i_3}\binom{\gamma_2m}{j_2 \; j_4} \\ 
& \times A_2^{i_2}A_3^{i_3}B_2^{j_2}B_4^{j_4}x^{2(i_2+j_2)+3i_3+4j_4	} \, .
\end{align*}
The coefficient of $x^{2j}$ in $P(x)$ is then given by
\begin{align}
\mathrm{Coef}\left[P(x),x^{2j}\right] = & \sum_{\substack{i_2,i_3,j_2,j_4\\2(i_2+j_2)+3i_3+4j_4 = 2j}} \binom{\gamma_1m}{i_2 \; i_3}\binom{\gamma_2m}{j_2 \; j_4} \notag \\  
& \phantom{-------} \times A_2^{i_2}A_3^{i_3}B_2^{j_2}B_4^{j_4} \; .
\label{eq:coeff_as_sum}
\end{align}
Consider first the sum of all terms where $i_3=j_4=0$. This is	
\begin{align*}
S_1 = &\sum_{i_2+j_2=j}\binom{\gamma_1m}{i_2}\binom{\gamma_2m}{j_2}A_2^{i_2}B_2^{j_2}\\
=&\sum_{i_2=0}^j\binom{\gamma_1m}{i_2}\binom{\gamma_2m}{j-i_2}A_2^{i_2}B_2^{j-i_2}.
\end{align*}
	
As $N\rightarrow\infty$ we have
\begin{align}
S_1 \rightarrow&\sum_{i_2=0}^j\frac{(\gamma_1m)^{i_2}}{i_2!}\frac{(\gamma_2m)^{j-i_2}}{(j-i_2)!}A_2^{i_2}B_2^{j-i_2}\frac{j!}{j!}\nonumber\\
=&\frac{1}{j!}\sum_{i_2=0}^j\binom{j}{i_2}(\gamma_1mA_2)^{i_2}(\gamma_2mB_2)^{j-i_2}\nonumber\\
=&\frac{1}{j!}(\gamma_1mA_2+\gamma_2mB_2)^j\label{star}
\end{align}
	
Note that this term is $\Theta(m^j)$ as $N\rightarrow\infty$. In general, the $(i_2,j_2,i_3,j_4)$ term is
\begin{eqnarray*}
\Theta((\gamma_1m)^{i_2+i_3}(\gamma_2m)^{j_2+j_4}) & = & \Theta( m^{i_2+i_3+j_2+j_4})\\
& = & \Theta(m^\kappa)
\end{eqnarray*}
where the exponent $\kappa$ satisfies
\begin{align*}
\kappa=\frac{2(i_2+i_3+j_2+j_4)}{2}\leq& \frac{2(i_2+j_2)+3i_3+4j_4}{2}\\
\leq& j
\end{align*}
and we conclude that $\kappa \le j$, with strict inequality unless $i_3=j_4=0$.

Since $m^j$ terms dominate all terms $m^\kappa$ with $\kappa < j$, the limiting expression for \eqref{eq:coeff_as_sum} as $N\rightarrow \infty$ involves only those product terms in $P(x)$ for which $r_t = 2$. Therefore, in general we obtain
\begin{align*}
\lim_{N\rightarrow\infty}\mathrm{Coef}& \left[\prod_{t\in I_c}\left[A^{(t)}(x)\right]^{\gamma_tm},x^{2j}\right] \! = \! \frac{1}{j!}\left(\sum_{t:r_t = 2}\gamma_t mA_2^{(t)}\right)^j\nonumber\\
=& \frac{1}{j!}\left[\sum_{t:r_t = 2}\frac{\rho_t m}{s_t\int\rho}A_t^{(t)}\right]^j = \frac{\left(\frac{EC}{2}\right)^j}{j!}.
\end{align*}
\end{proof}

Next we use this result to generalize \cite[Lemma 9]{Orlitsky2005} as follows. 

\medskip
\begin{theorem}
For GLDPC code Ensemble 2, we have	
\begin{equation*}
\label{GenOrl}
\lim_{N\rightarrow\infty}\mathrm{Pr}(d_{\mathrm{min}}=1) = 1-\exp\left(-\frac{\lambda'(0)C}{2}\right) > 0.
\end{equation*}
\end{theorem}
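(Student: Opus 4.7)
The plan is to identify each weight-1 codeword of the GLDPC code with a single VN $v$ whose single-bit assignment (``$v=1$, all other VNs $=0$'') satisfies every CN constraint, to compute the expected number $\mathbb{E}[X]$ of such VNs via Lemma~1, and then to upgrade this to a Poisson limit in order to evaluate $\Pr(d_{\min}=1)$ in closed form.

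First I would fix a VN $v$ of degree $d$ and compute the probability $p_d$ that $v$'s single-bit assignment is a codeword. Under the uniform permutation defining Ensemble~2, the image of the $d$ VN-sockets of $v$ on the CN side is a uniformly random size-$d$ subset of the $E$ CN-sockets; the assignment is a codeword iff, for each CN, the intersection of this image with that CN's socket set is the support of a local codeword. Since the number of ``good'' size-$d$ subsets is exactly the coefficient of $x^d$ in $P(x) = \prod_{t \in I_c}[A^{(t)}(x)]^{\gamma_t m}$, we obtain
\begin{equation*}
p_d = \binom{E}{d}^{-1} \mathrm{Coef}\!\left[P(x),\, x^d\right].
\end{equation*}
Applying Lemma~1 with $j=1$ gives $\mathrm{Coef}[P(x),x^2] \sim EC/2$, hence $p_2 \sim C/E$. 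Inspection of the expansion used in the proof of Lemma~1 shows $\mathrm{Coef}[P(x), x^d] = O(E^{\lfloor d/2 \rfloor})$ for all $d \ge 2$, so $p_d = O(E^{-\lceil d/2 \rceil})$. Letting $X$ denote the number of weight-1 codewords, summation over VNs gives
\begin{equation*}
\mathbb{E}[X] = \sum_d N \tilde{\lambda}_d \, p_d \longrightarrow N \tilde{\lambda}_2 \cdot \frac{C}{E} = \frac{\lambda'(0)\, C}{2} =: \mu,
\end{equation*}
using $E = N/\int \lambda$, $\tilde{\lambda}_2 = \lambda_2/(2\int \lambda)$, and $\lambda'(0) = \lambda_2$; contributions from $d \ge 3$ are each $O(N^{1-\lceil d/2 \rceil}) = o(1)$.

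Finally I would establish the Poisson limit $X \xrightarrow{d} \mathrm{Poisson}(\mu)$ by the method of factorial moments, mirroring the approach of \cite{Orlitsky2005}. For each fixed $k$, the $k$-th factorial moment $\mathbb{E}[X(X-1)\cdots(X-k+1)]$ counts ordered $k$-tuples of distinct VNs each forming a weight-1 codeword; by an analogous socket-counting argument, the dominant contribution comes from $k$-tuples of degree-2 VNs whose $k$ edge-pairs land at $k$ distinct CNs, each forming a weight-2 local codeword, yielding a limit of $\mu^k$, while all other configurations (a VN of degree $\ge 3$, or two edge-pairs sharing a CN) contribute $o(1)$. Convergence of every factorial moment to that of $\mathrm{Poisson}(\mu)$ gives $X \xrightarrow{d} \mathrm{Poisson}(\mu)$, and so $\Pr(d_{\min}=1) = \Pr(X \ge 1) \to 1 - e^{-\mu}$, as claimed. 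The main technical obstacle is this factorial-moment step: one must carefully bound the lower-order events in which pairs of degree-2 VNs have their four edges landing at overlapping CNs and verify that these corrections are $o(1)$, which is notationally heavier than the $\mathbb{E}[X]$ computation but follows the same combinatorial pattern as the proof of Lemma~1.
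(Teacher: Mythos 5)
Your proposal is correct and follows essentially the same route as the paper: the paper also reduces to degree-$2$ VNs, uses Lemma~\ref{Coef} with $j=1$ to evaluate the pairwise coefficient, and computes exactly the quantities you call factorial moments --- in the paper they appear as the terms of the inclusion--exclusion expansion of $\mathrm{Pr}(\cup_i A_i)$, each converging to $\mu^j/j!$ with $\mu = \lambda'(0)C/2$, after which summing the alternating series gives $1-e^{-\mu}$. Your Poisson-convergence framing and the explicit $O(E^{-\lceil d/2\rceil})$ bound discarding VNs of degree $d\ge 3$ are just a repackaging of the same argument (the paper delegates that reduction to the justification in \cite{Orlitsky2005}), so there is no substantive difference in approach.
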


\begin{proof}
We restrict ourselves to considering only degree-2 variable nodes (this may be justified in a manner similar to that described in the proof of \cite[lemma 9]{Orlitsky2005}). Recall that there are $\tilde{\lambda}_2 n$ of these VNs.

From Lemma \ref{Coef} in the special case $j=1$, we have
\begin{equation*}
\lim_{N\rightarrow\infty}\mathrm{Coef}\left[\prod_{t\in I_c}\left[A^{(t)}(x)\right]^{\gamma_t m},x^2\right]=\frac{EC}{2}.
\end{equation*}
Let $A_i$ denote the event that VN $\{v_i\}$ is a codeword (of Hamming weight $1$) of the GLDPC code:
\begin{align*}
A_i=&\left\{\left\{v_i\right\}\mathrm{\ is \ a \ codeword}\right\}\nonumber \\
=& \left\{\left\{v_i\right\}\in C\right\},\nonumber 
\end{align*}
Then $\mathrm{Pr}(d_{\mathrm{min}}=1)$ may be written as a union of such events, which may then be expanded using the inclusion-exclusion principle:
\begin{align}
&\mathrm{Pr}(d_{\mathrm{min}}=1)=\mathrm{Pr}\left(\cup_{i}A_i\right)\nonumber\\
&= \sum_{1\leq i_1\leq\tilde{\lambda}_2 n}\mathrm{Pr}(A_{i_1})-\sum_{1\leq i_1,i_2\leq \tilde{\lambda}_2 n}\mathrm{Pr}(A_{i_1},A_{i_2}) + \cdots \label{eq:alternating_sum}
\end{align}
The general term in this alternating sum is the sum, evaluated over all sets $\mathcal{V}_2 = \{v_{i_1}, v_{i_2}, \dots, v_{i_j}\}$ of $j$ degree-$2$ VNs, of the probability that all VNs in the set $\mathcal{V}_2$ individually form codewords (of Hamming weight $1$), i.e.,
\begin{align*}
&\sum_{1\leq i_1<i_2<\dots<i_j\leq\tilde{\lambda}_2n}\mathrm{Pr}\left(\left\{v_{i_1}\right\},\left\{v_{i_2}\right\},\dots,\left\{v_{i_j}\right\}\in C\right)\\
&\qquad\qquad = \binom{\tilde{\lambda}_2n}{j}\frac{\left\{\mathrm{Coeff}\left[\prod_{t\in I_c}\left[A^{(t)}(x)\right]^{\gamma_t m},x^2\right]\right\}^j}{\binom{E}{2 \; 2 \; 2 \cdots \; 2}}.
\end{align*}
In the fraction above, the denominator is the number of ways of choosing $j$ pairs of edges in the Tanner graph, while the numerator counts the number of these such that each pair individually satisfies the CN constraints (i.e., when 1s are placed on this pair of edges, and zeros on the other edges). In the limit as $N\rightarrow\infty$ we obtain (invoking Lemma \ref{Coef})
\begin{align*}
& \lim_{N\rightarrow\infty}\sum_{1\leq i_1<i_2<\dots<i_j\leq\tilde{\lambda}_2n}\mathrm{Pr}\left(\left\{v_{i_1}\right\},\left\{v_{i_2}\right\},\dots,\left\{v_{i_j}\right\}\in C\right)\\
&\qquad \qquad= \frac{\left(\frac{\lambda'(0)E}{2}\right)^j}{j!}\frac{\left(\frac{EC}{2}\right)^j}{E^{2j}} 2^j \nonumber\\
&\qquad\qquad=\frac{\left[\frac{\lambda'(0)C}{2}\right]^j}{j!} \; ,\nonumber
\end{align*}
where we have made use of the fact that
\begin{equation}
\tilde{\lambda}_2n = \frac{\lambda_2 E}{2} = \frac{\lambda'(0)E}{2} \; .
\label{eq:lambda_prime_lambda_tilde}
\end{equation}
Substituting this result into \eqref{eq:alternating_sum} and using the Taylor series of the exponential function yields the result of the Theorem.
\end{proof}
\medskip

This result shows that codes from Ensemble 2 (assuming the existence of CN types of minimum distance $2$) have a nonzero probability of having minimum distance equal to $1$. Note that such codes can be removed by ensemble expurgation.

Next we prove an upper bound on the probability of the minimum distance for a GLDPC code, which generalizes \cite[Lemma 22]{Orlitsky2005}.

\medskip
\begin{theorem}
\label{mindistthm2}
For GLDPC code Ensemble 2,
\begin{equation*}
\mathrm{Pr}(d_{\mathrm{min}}\leq \alpha^*N)\leq \frac{1}{\sqrt{1-\lambda'(0)C}}-1.
\end{equation*}	
\end{theorem}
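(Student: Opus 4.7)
The plan is to follow the strategy of \cite[Lemma 22]{Orlitsky2005} and bound, via Markov's inequality,
\begin{equation*}
\mathrm{Pr}(d_{\min}\leq \alpha^* N) \;\leq\; \sum_{w=1}^{\lfloor \alpha^* N \rfloor} \mathbb{E}[N_w],
\end{equation*}
where $N_w$ denotes the number of weight-$w$ codewords, and then to show that the right-hand side is bounded asymptotically by the claimed expression. As in the proof of Theorem~3, only codewords whose support lies entirely in the set of degree-2 VNs contribute in the limit $N\to\infty$; configurations involving any VN of degree $\geq 3$ yield terms that are lower order, and this reduction can be justified exactly as in \cite[Lemma 22]{Orlitsky2005}.

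For the surviving terms, I would directly count the expected number of weight-$w$ codewords supported on degree-2 VNs. Given a fixed set of $w$ such VNs, their $2w$ VN-side edge sockets are matched uniformly at random to the $E$ CN-side sockets, and the number of matchings producing a valid codeword at every CN is $(2w)!\cdot\mathrm{Coef}\!\left[\prod_{t\in I_c}[A^{(t)}(x)]^{\gamma_t m},x^{2w}\right]$, the factorial accounting for the labellings of the edges and the coefficient counting placements consistent with the local codes. Dividing by the total number of matchings (which is $\sim E^{2w}$), multiplying by $\binom{\tilde{\lambda}_2 n}{w}$, and applying Lemma~\ref{Coef} together with the identity $\tilde{\lambda}_2 n=\lambda'(0)E/2$ from \eqref{eq:lambda_prime_lambda_tilde}, I expect
\begin{equation*}
\lim_{N\to\infty} \mathbb{E}[N_w] \;=\; \frac{(2w)!}{(w!)^2\,4^w}\bigl(\lambda'(0)\,C\bigr)^w.
\end{equation*}

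Summing over $w\geq 1$ and recognising the series as the Taylor expansion about the origin of $(1-\lambda'(0)C)^{-1/2}$ gives
\begin{equation*}
\sum_{w\geq 1} \frac{(2w)!}{(w!)^2\,4^w}\bigl(\lambda'(0)C\bigr)^w \;=\; \frac{1}{\sqrt{1-\lambda'(0)C}}-1,
\end{equation*}
which is exactly the claimed bound. The main obstacle is the exchange of limit and summation needed to pass from per-$w$ asymptotics to a uniform estimate valid for the sum up to $\lfloor\alpha^* N\rfloor$; this requires a geometric envelope on $\mathbb{E}[N_w]$ uniform in $w$, which, given the assumption $\lambda'(0)C<1$ implicit in finiteness of the bound, should follow from the dominance of the degree-2-VN configurations established in the reduction, by arguments analogous to those in \cite{Orlitsky2005}.
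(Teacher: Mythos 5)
Your proposal is correct and follows essentially the same route as the paper: a union/first-moment bound over low-weight supports restricted to degree-2 VNs, evaluation of each term via Lemma~\ref{Coef} and the identity $\tilde{\lambda}_2 n = \lambda'(0)E/2$ to obtain $\binom{2w}{w}\left(\lambda'(0)C/4\right)^w$, and summation of the central-binomial-coefficient series. The limit--sum interchange you flag as the main obstacle is treated just as informally in the paper's own proof, so your attempt matches it in both structure and level of rigor.
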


\begin{proof}
We denote by $\mathbb{V}$ and $\mathbb{C}$ the set of length-$N$ binary vectors and the set of codewords, respectively.
Consider events $\{S\in \mathbb{C}\}$, where $S \in \mathbb{V}$. The probability that $d_\mathrm{min}\leq \alpha^*N$ for a code is equal to the probability that for all possible $S$ with $|S|\leq \alpha^* N$, \emph{at least} one of them is a member of $\mathbb{C}$, which in turn is equal to the probability of the union of events $\{S\in \mathbb{C}\}$ where $|S|\leq \alpha^* N$, i.e.,
\begin{equation*}
\mathrm{Pr}(d_{\mathrm{min}}\leq\alpha^*N)=\mathrm{Pr}\left[\bigcup_{S,|S|\leq\alpha^*N}\{S\in \mathbb{C}\}\right].
\end{equation*}

Using the union bound,
\begin{align}
\label{PrUnionBd}
\mathrm{Pr}\left[\bigcup_{S,|S|\leq\alpha^*N}\{S\in \mathbb{C}\}\right]
&\leq \sum_{S,|S|\leq\alpha^*N}\mathrm{Pr}(S\in \mathbb{C})\nonumber\\
&= \sum_{j=1}^{\alpha^*N}\left[\sum_{S,|S|=j}\mathrm{Pr}(S\in \mathbb{C})\right].
\end{align}
Since all of the summands in the innermost summation in (\ref{PrUnionBd}) are equal, we have	
\begin{equation*}
\sum_{S,|S|=j}\mathrm{Pr}(S\in \mathbb{C}) \! = \! \binom{\tilde{\lambda}_2n}{j}\frac{\mathrm{Coef}\left[\prod_{t\in I_c}\left[A^{(t)}(x)\right]^{\gamma_t m},x^{2j}\right]}{\binom{E}{2j}}.
\end{equation*}
In the fraction above, the denominator is the number of ways of choosing $2j$ edges in the Tanner graph, while the numerator counts the number of these such that placing 1s on these edges and 0s on the other edges, satisfies all of the CN constraints. 
Taking the limit as $N\rightarrow\infty$ and invoking Lemma \ref{Coef}, we obtain
\begin{align*}		
\lim_{N\rightarrow\infty}\sum_{S,|S|=j}\mathrm{Pr}(S\in \mathbb{C}) &= \frac{\left[\frac{\lambda'(0)E}{2}\right]^j}{j!}\frac{\left[\frac{EC}{2}\right]^j}{j!}\frac{(2j)!}{E^{2j}}\\
&= \binom{2j}{j} \left(\frac{\lambda'(0)C}{4}\right)^j \; ,
\end{align*}
where we have again used \eqref{eq:lambda_prime_lambda_tilde}. Therefore,
\begin{equation}
\label{sumlambda}
\mathrm{Pr}(d_{\mathrm{min}}\leq \alpha^*N)\leq \sum_{j=1}^\infty \binom{2j}{j} \left(\frac{\lambda'(0)C}{4}\right)^j.
\end{equation}
The generating function for the central binomial coefficient is given by \cite{Wilf1994}
\begin{equation}
\label{cbcgen}	
\sum_{j=0}^{\infty}\binom{2j}{j}x^j=\frac{1}{\sqrt{1-4x}}.
\end{equation}
Finally, inserting (\ref{cbcgen}) into (\ref{sumlambda}) we arrive at the statement of the theorem.	
\end{proof}

\bibliographystyle{IEEEtran}
\bibliography{IEEEabrv,main}

\end{document}